\let\doendproof\endproof
\renewcommand{\endproof}{\qed\doendproof}
\newcommand{\myparagraph}[1]{\smallskip\noindent\textit{#1} }
\newcommand{\boldparagraph}[1]{\medskip\noindent\textbf{#1} }
\title{Graphs with Plane Outside-Obstacle
  Representations
%
%
}
\author{Alexander Koch \and Marcus Krug \and Ignaz Rutter}
\institute{Karlsruhe Institute of Technology (KIT)\\
  \texttt{alexander.koch2@student.kit.edu, krug.marcus@gmail.com,
    rutter@kit.edu}}
\begin{document}

\pagestyle{plain}
\maketitle

\begin{abstract}
  An \emph{obstacle representation} of a graph consists of a set of
  polygonal obstacles and a distinct point for each vertex such that
   two points see each other if and only if the corresponding
  vertices are adjacent.  Obstacle representations are a recent
  generalization of classical polygon--vertex visibility graphs, for
  which the characterization and recognition problems are
  long-standing open questions.

  In this paper, we study \emph{plane outside-obstacle
    representations}, where all obstacles lie in the unbounded face of
  the representation and no two visibility segments cross.  We
  give a combinatorial characterization of the biconnected graphs that
  admit such a representation.  Based on this characterization, we
  present a simple linear-time recognition algorithm for these graphs.
  As a side result, we show that the plane vertex--polygon visibility
  graphs are exactly the maximal outerplanar graphs and that every
  chordal outerplanar graph has an outside-obstacle representation.
\end{abstract}

\section{Introduction}

Visibility, and hence, visibility representations of graphs are
central to many areas, such as architecture, sensor networks, robot
motion planning, and surveillance and security.  There is a long
history of research on characterizing and recognizing visibility
graphs in various settings; see the related work below.  Despite
tremendous efforts characterizations and efficient recognition
algorithms are only known for very restricted
cases~\cite{ec-rvgsp-90,ec-nrcvg-95}.  Recently, Alpert et
al.~\cite{akl-ong-10} introduced obstacle representations of graphs,
which generalize many previous visibility variants, such as
polygon--vertex visibility.  In this paper, we study \emph{plane
  outside-obstacle representations}, where the visibility segments may
not cross, and a single obstacle is located in the outer face of the
representation.  We characterize the biconnected graphs admitting such
a representation and give a linear-time recognition algorithm.  This
is one of the first results that characterizes such a class of graphs
and gives an efficient recognition algorithm.  In the following we
first give some basic definitions.  Afterwards, we present an overview
of related work and describe our contribution in more detail.

An \emph{obstacle representation} of a graph~$G=(V,E)$ consists of a
set of polygonal obstacles and a distinct point for each vertex
in~$V$.  The representation is such that two points see each other
if and only if the corresponding vertices are adjacent.  The
\emph{obstacle number} of~$G$ is the smallest number of obstacles in
any obstacle representation of~$G$.

In an \emph{outside-obstacle representation} all obstacles are in the
unbounded face of the representation, i.e., they are contained in the
unbounded face of the corresponding straight-line drawing of the
graph.  Outside-obstacle representations are a recent generalization
of classical \emph{polygon--vertex visibility} graphs, where the
obstacle is a simple polygon, the points are the vertices of the
polygon and visibility segments have to lie inside the polygon.  The
corresponding characterization problem and the complexity of the recognition
problem are long-standing open questions.

Figure~\ref{fig:examples-oor} shows examples of outside-obstacle
representations.  We note that, in the case of outside-obstacle
representations, it can be assumed that there exists only a single
obstacle that surrounds the outer face of the representation, as
illustrated in Fig.~\ref{fig:example-oor-1}.  In particular, graphs
with an outside-obstacle representation have obstacle number at
most~1.  In our setting, given a straight-line drawing of a graph, we
hence do not make the outside obstacle explicit.  Instead, we require
for an outside-obstacle representation of a graph~$G$ that any
non-edge of~$G$ intersects the outer face of~$G$.  It is then not
difficult to construct a corresponding obstacle.  Throughout this
document, we assume that points and vertices of obstacles are in
\emph{general position}, so no three of them are collinear.

\begin{figure}[tb]
  \centering  
  \begin{subfigure}[b]{.45\textwidth}
    \centering
    \includegraphics[page=5]{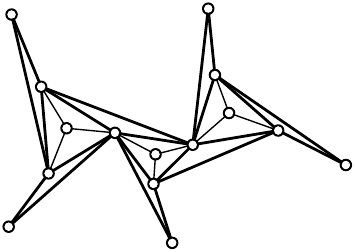}
    \caption{}\label{fig:example-oor-1}
  \end{subfigure}\hfil
  \begin{subfigure}[b]{.45\textwidth}
    \centering
    \includegraphics[page=1]{fig/examples-oor}
    \caption{}\label{fig:example-oor-2}
  \end{subfigure}
  \caption{(a) A non-planar outside-obstacle representation of the
    octahedron; the obstacle is shown in gray.  (b) A plane outside-obstacle representation of
    Figure~\ref{fig:example-3}.}
  \vspace{-3ex}
  \label{fig:examples-oor}
\end{figure}

\myparagraph{Related Work.}
Graphs with an outside-obstacle representation are equivalent to
visibility graphs of a pointset within a simple polygon. Therefore,
polygon--vertex visibility graphs form an important subclass of our
graphs class, where the pointset coincides with the corners of the
surrounding simple polygon.  Such graphs have been extensively studied
due to their many applications, e.g., in gallery
guarding~\cite{r-agta-87}.

Polygon--vertex visibility graphs were first introduced in 1983 by
Avis and ElGindy~\cite{ae-caps-83} and are most studied in the field
of visibility problems~\cite{gg-upvgpsp-12}.  One of the first results
on the topic was that maximal outerplanar graphs are polygon--vertex
visibility graphs~\cite{e-hdpa-85}.  Ghosh~\cite{g-rcvgs-97} gives a
set of four necessary conditions for polygon--vertex visibility
graphs, which he conjectured to be also sufficient. However,
Streinu~\cite{s-npvg-05} constructed a counterexample.  As pointed out
in~\cite{gg-upvgpsp-12}, two of Ghosh's necessary
conditions~\cite{g-rcvgs-97} imply the conditions of a
characterization attempt by Abello and Kumar in terms of oriented
matroids~\cite{ak-vgom-02}, which hence cannot be sufficient.

So far, characterizations have only been achieved for polygon--vertex
visibility graphs of restricted polygons.  Everett and
Corneil~\cite{ec-rvgsp-90,ec-nrcvg-95} give a characterization of
visibility graphs in spiral and 2-spiral polygons -- polygons that
have exactly one and two chains of reflex vertices, respectively.
They are characterized as interval graphs and perfect graphs.  The
best known complexity result about the recognition and reconstruction
problem of polygon--vertex visibility graphs is that they are in
PSPACE~\cite{e-vgr-90}.

A generalization of these graphs are induced subgraphs of
polygon--vertex visibility graphs, or \emph{induced visibility graphs}
for short.  Spinrad~\cite{s-egr-03} considers this graph class the
natural generalization of polygon--vertex visibility graphs, which is
hereditary with respect to induced subgraphs.
Everett and Corneil~\cite{ec-nrcvg-95} show that there is no finite
set of forbidden induced subgraphs in polygon--vertex visibility
graphs.


Coullard and Lubiw show that 3-connected polygon--vertex visibility
graphs admit a 3-clique ordering~\cite{cl-dvg-91}.
Abello et al.~\cite{alp-vgsp-92} prove that every 3-connected planar
polygon--vertex visibility graph is maximal planar and that every
4-connected such graph cannot be planar.  Their conjecture that
Hamiltonian maximal planar graphs with a 3-clique ordering are
polygon--vertex visibility graphs was disproven by Chen and
Wu~\cite{cw-dcpvg-01}.  According to O'Rourke~\cite{o-opcvi-98}
necessary and sufficient conditions for a polygon--vertex visibility
graph to be planar are known~\cite{lc-pvg-94}, but do not lead to a
polynomial recognition algorithm.


For the more general question of obstacle numbers, Alpert et
al.~\cite{akl-ong-10} give a construction for graphs with large
obstacle number and small example graphs that have obstacle number
greater than~1.  They further show that every outerplanar graph admits
a (non-planar) outside-obstacle representation, i.e., they are
visibility graphs of pointsets inside simple polygons.  Subsequent
papers extend the results on obstacle numbers.  Pach and
Sarıöz~\cite{ps-sglon-11} construct small graphs with obstacle
number~2 and show that bipartite graphs with arbitrarily large
obstacle number exist.  Mukkamala et al.~\cite{mpp-lbong-12} show that
there are graphs on $n$ vertices with obstacle number
$\Omega(n/\log{n})$. It is an open question whether any graph with
obstacle number~1 admits an outside-obstacle representation.

\myparagraph{Contribution and Outline.}
\begin{figure}[tb]
  \centering  
  \begin{subfigure}[b]{.1\textwidth}
    \centering
    \includegraphics[page=1]{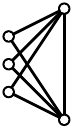}
    \caption{}\label{fig:example-1}
  \end{subfigure}\hfil
  \begin{subfigure}[b]{.3\textwidth}
    \centering
    \includegraphics[page=2]{fig/examples}
    \caption{}\label{fig:example-2}
  \end{subfigure}\hfil
  \begin{subfigure}[b]{.3\textwidth}
    \centering
    \includegraphics[page=3]{fig/examples}
    \caption{}\label{fig:example-3}
  \end{subfigure}
  \caption{Graphs admitting (c) / not admitting (a,b) a
    plane outside-obstacle representation.}  \vspace{-3ex}
  \label{fig:examples}
\end{figure}
In this paper, we study \emph{plane outside-obstacle representations},
where the drawing of~$G$, without the obstacles, is free of crossings;
see Fig.~\ref{fig:example-oor-2} for an example.  Consider the graphs
shown in Fig.~\ref{fig:examples}.  We will see that the first two
graphs do not admit a plane outside-obstacle representation, whereas
the last example has one.  Note that the drawing in
Fig.~\ref{fig:example-1} is a (non-planar) outside-obstacle
representation.  Our main results are the following.
\begin{inparaenum}[(1)]
\item Every outerplanar graph whose inner faces are triangles admits a
  plane outside-obstacle representation.
\item A characterization of the biconnected graphs that admit a plane
  outside-obstacle representation.
\item A linear-time algorithm for testing whether a biconnected graph
  admits a plane outside-obstacle representation.
\end{inparaenum}
As a side result, we obtain a simple combinatorial proof of ElGindy's
classical result that maximal outerplanar graphs are polygon--vertex
visibility graphs~\cite{e-hdpa-85}.

Our paper is structured as follows.  First, we derive a simple
necessary condition on the structure of biconnected graphs that admit
a plane outside-obstacle representation in
Section~\ref{sec:inner-chordal}.  This restricts the class of graphs
we have to consider and we derive some useful structural results about
such graphs.  Afterwards, in Section~\ref{sec:characterization}, we
give a local description of plane outside-obstacle representations
and, based on this, we derive a combinatorial characterization of the
biconnected planar graphs that admit a plane outside-obstacle
representation in terms of an orientation of a certain subset of
edges.  Using this characterization, we prove our main results in
Section~\ref{sec:main}.
%

\section{Inner-Chordal Plane Graphs}
\label{sec:inner-chordal}

A graph with a fixed planar embedding is \emph{inner-chordal plane} if
any cycle~$C$ of length at least~4 has a chord that is embedded in the
bounded region of~$\mathbb{R}^2 \setminus C$; see
Fig.~\ref{fig:inner-chordal}.  We first show that we can restrict our
analysis to inner-chordal plane graphs.

\begin{lemma}\label{lem:plane-oor-chordal}
  Graphs with a plane outside-obstacle representation are
  inner-chordal plane.
\end{lemma}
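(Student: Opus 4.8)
The plan is to prove the contrapositive statement directly: if a graph $G$ has a plane outside-obstacle representation, then in its induced planar embedding every cycle $C$ of length at least $4$ must have a chord embedded inside the bounded region enclosed by $C$. So I would start by fixing a plane outside-obstacle representation of $G$, which gives a straight-line drawing with a distinguished outer face such that every non-edge crosses the outer face. Let $C = v_1 v_2 \cdots v_k$ with $k \ge 4$ be an arbitrary cycle of the embedding, and let $R$ denote the bounded region of $\mathbb{R}^2 \setminus C$. The goal is to exhibit a chord of $C$ drawn inside $R$.

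The key geometric idea is visibility inside the closed polygon bounded by $C$. Since the points are in general position and $C$ is drawn as a simple closed polygon (the vertices are the points $v_1,\dots,v_k$ and the edges are straight segments), the interior $R$ is a simple polygon with the $v_i$ as its corners. For any simple polygon on at least four vertices there exists an internal diagonal, i.e.\ two non-consecutive corners $v_i, v_j$ such that the open segment $\overline{v_iv_j}$ lies entirely in the interior $R$; this is the standard fact underlying polygon triangulation (the existence of an ``ear''). First I would invoke this to obtain such a pair $v_i,v_j$. The segment $\overline{v_iv_j}$ then does not cross $C$ and stays in the bounded region, so in particular it does not intersect the unbounded face of the drawing.

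The crucial step is to argue that $v_iv_j$ is actually an \emph{edge} of $G$, not merely a geometric segment. Here I would use the defining property of an outside-obstacle representation: two points see each other (the segment between them avoids all obstacles, equivalently avoids the outer face) if and only if the corresponding vertices are adjacent. Since $\overline{v_iv_j}$ lies inside $R$ and hence is disjoint from the outer face, the points $v_i$ and $v_j$ see each other, so $v_iv_j \in E(G)$. Because $v_i$ and $v_j$ are non-consecutive on $C$, this edge is a chord of $C$, and it is drawn inside the bounded region $R$, which is exactly what inner-chordal planarity requires.

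The main obstacle I anticipate is handling the interaction between the chosen interior diagonal and the rest of the drawing, rather than the diagonal-existence step itself. Specifically, the segment $\overline{v_iv_j}$ must avoid the outer face, and for that I must be sure that the outer face of the whole drawing does not reach into the region $R$ bounded by $C$. Since the representation is \emph{plane}, no visibility segment crosses another, so the drawing restricted to the interior of $C$ behaves like a genuine plane subdrawing and the unbounded face cannot penetrate $R$ without an edge crossing; I would make this explicit. A secondary subtlety is that the diagonal guaranteed by the polygon-triangulation argument could in principle coincide with an existing edge that is itself drawn outside $R$ in the larger embedding — but since we only use the segment to certify visibility and planarity forces a consistent embedding, the chord we produce is the one realized inside $R$. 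Pinning down these planarity consequences carefully is where the real work lies; once the outer face is confined outside $R$, the visibility characterization delivers the chord immediately.
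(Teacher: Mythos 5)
Your proposal is correct and follows essentially the same argument as the paper: both triangulate the simple polygon bounded by the cycle $C$ to obtain an internal diagonal, observe that this diagonal lies inside the bounded region and hence avoids the outer face (the obstacle), and conclude from the visibility property that it must be an edge of $G$, i.e., a chord of $C$ drawn inside it. The only cosmetic difference is that the paper argues by contradiction (choosing a cycle with no inner chord), whereas you state the implication directly; the geometric and visibility steps, including the planarity observation that the outer face cannot reach inside $C$, are the same.
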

\begin{proof}
  Let~$G$ be a graph with a plane outside-obstacle representation, and
  assume it is not inner-chordal.  Hence, there exists a cycle~$C$ of
  length at least~4, whose interior does not contain a chord.  Note
  that the obstacle lies outside of~$C$ by definition.  The cycle~$C$
  is embedded as the boundary of a simple polygon~$P$ on at least four
  vertices.  Since~$P$ can be triangulated, there exists a pair of
  non-adjacent vertices~$u$ and~$v$ on~$C$ such that the segment~$uv$
  is completely contained in~$P$.  Hence the obstacle cannot
  intersect~$uv$, and thus~$\{u,v\} \in E(G)$ by definition,
  contradicting our choice of~$u$ and~$v$.
\end{proof}

\begin{figure}[tb]
  \centering
  \begin{subfigure}[b]{.2\textwidth}
    \includegraphics[page=1]{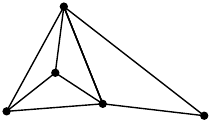}
      \caption{}   
  \end{subfigure}\hfil
  \begin{subfigure}[b]{.2\textwidth}
    \centering
    \includegraphics[page=3]{inner-chordal}
    \caption{}
  \end{subfigure}\hfil
  \begin{subfigure}[b]{.2\textwidth}
    \centering
    \includegraphics[page=2]{inner-chordal}
    \caption{}
  \end{subfigure}
  \caption{(a) An inner-chordal graph. (b), (c) Chordal, but not inner-chordal, plane graphs.}
  \vspace{-2ex}
  \label{fig:inner-chordal}
\end{figure}

Note that Lemma~\ref{lem:plane-oor-chordal} shows immediately that the
graph from Fig.~\ref{fig:example-1} does not admit a plane outside-obstacle
representation.  Although this graph is chordal, it does not have a
planar embedding that is inner-chordal.
In the following, we consider only inner-chordal plane graphs.  Note
that, in particular, every inner face of an inner-chordal plane graph
is a triangle.  Moreover, an outerplanar graph is inner-chordal if and
only if it is chordal, which is the case if and only if every inner
face is a triangle.

\begin{lemma}
  \label{lem:inner-chordal-degree}
  Let~$G$ be an inner-chordal plane graph.  Then, every inner vertex
  of~$G$ has degree~3 and no two inner vertices are adjacent.
\end{lemma}

\begin{proof}
  Since~$G$ is inner-chordal, every inner face is necessarily a
  triangle.  This implies that the neighbors of any inner vertex form
  a cycle.  This cycle does not have an inner chord, and hence,
  since~$G$ is simple and inner-chordal, it must have length~3.  This
  implies that any inner vertex has degree~3.  Moreover, it follows
  that the neighbors of any inner vertex~$v$ must be incident to the
  outer face as they already have degree~3 in the neighborhood of~$v$.
\end{proof}

This description also gives rise to a certain tree that is associated
with every biconnected inner-chordal graph.  Let~$G$ be a biconnected
inner-chordal plane graph.  A \emph{chord} of~$G$ is an edge that is
not incident to the outer face but whose endpoints are incident to the
outer face.  Lemma~\ref{lem:inner-chordal-degree} implies a
decomposition of~$G$ along its chords.  Namely, we first remove all
inner vertices.  Each such removal transforms a 4-clique of~$G$ into a
triangular face; we mark each triangle that results from such a
removal.  The resulting graph~$G'$ is outerplanar and every inner face
is a triangle.  Now the weak dual~$T'$ of~$G'$ is a tree, where each
node corresponds to a triangle of~$G'$.  By marking the nodes of~$T'$
that correspond to a marked triangle, we obtain the construction
tree~$T$ of~$G$, denoted~$T(G)$.  Note that each marked node of~$T(G)$
corresponds to a 4-clique of~$G$, whereas an unmarked node corresponds
to a triangular face of~$G$.  We refer to these nodes as~$K_4$-
and~$K_3$-nodes, respectively.  The edges of~$T(G)$ correspond
bijectively to the chords of~$G$.  We refer to the vertices of~$T(G)$
as nodes to distinguish them from the vertices of~$G$.  For a
node~$\tau$ of~$T(G)$, we denote by~$V_\tau$ the vertices of the
corresponding triangle or 4-clique.  

\begin{figure}[tb]
  \centering
  \includegraphics{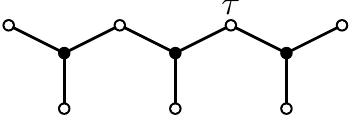}
  \caption{The construction tree of graphs~(b) and~(c) in
    Fig.~\ref{fig:examples}; $K_3$-nodes are empty and $K_4$-nodes are
    filled.  Structurally, the only difference between the two graphs
    is how the two subgraphs of~$G$ are attached to the triangle
    corresponding to the $K_3$-node~$\tau$.}
  \vspace{-3ex}
\end{figure}

Observe that, if we store with each node of~$T(G)$ the corresponding
edges and use the bijection of the edges of~$T(G)$ with the chords
of~$G$ to find the shared chord of adjacent nodes, we can reobtain~$G$
from~$T(G)$ by merging triangles and 4-cliques that are adjacent
in~$T(G)$ along their shared chords.  Then~$T(G)$ is essentially the
SPQR-tree of~$G$~\cite{dt-omtc-96}.  We decided to avoid the
technical machinery associated with SPQR-trees and rather work with
the construction tree, which is more tailored to our needs.

\section{Characterization of Plane Visibility Representations}
\label{sec:characterization}

In this section we devise a combinatorial characterization of the
biconnected inner-chordal graphs that admit a plane outside-obstacle
representation.  This is done in two steps.  First, we show that,
aside from being free of crossings, the property of being a plane
outside-obstacle representation depends only on local features in the
drawing, namely, for each chord of a graph~$G$, its neighbors must be
embedded in certain regions.  In a second step, we show that this
essentially induces a binary choice for each chord.  In this way, an
outside-obstacle orientation induces an orientation of the chords
of~$G$, and we will characterize the existence of a plane
outside-obstacle representation in terms of existence of a suitable
chord orientation.

\boldparagraph{Local Description of Plane Visibility Representations.}
Next we aim to understand better which planar straight-line drawings
are outside-obstacle representations.  As a first observation,
consider two triangles~$D$ and~$D'$ sharing a common edge~$e=\{u,v\}$,
which then forms a chord.  Let~$w$ and~$w'$ denote the tips of~$D$
and~$D'$ with respect to base~$e$, respectively.  For an
outside-obstacle representation it is a necessary condition that the
non-edge~$\{w,w'\}$ intersects the outer face.  We thus have to
position the tips in such a way that the segments~$ww'$ does not lie
inside the drawing of~$D$ and~$D'$.  We use the following definition;
see Fig.~\ref{fig:region} for an illustration.

\begin{definition}\label{dfn:regions1}
  Let $D$ be a triangle and $u$ a vertex of $D$.  Then $R_D(u)$
  denotes the intersection of the half-planes defined by the sides
  of~$D$ incident to~$u$ not containing~$D$.
\end{definition}

\begin{figure}[tb]
  \centering  
  \begin{subfigure}[b]{.5\textwidth}
    \centering
    \includegraphics{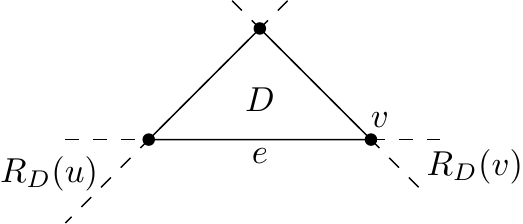}
    \caption{}\label{fig:region}
  \end{subfigure}\hfil
\begin{subfigure}[b]{.45\textwidth}
  \centering
    \includegraphics[page=1]{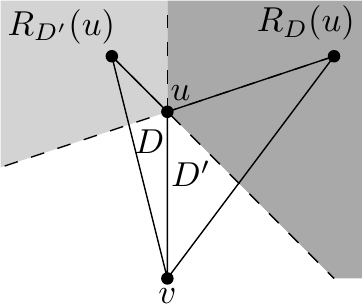}
    \caption{}\label{fig:attachment}
  \end{subfigure}
  \caption{(a) Regions of triangle~$D$ at edge~$e=\{u,v\}$ and (b)
    attachment of two triangles~$D$ and~$D'$ along chord~$\{u,v\}$.}
  \label{fig:proof-regions2}
  \vspace{-2ex}
\end{figure}

To ensure that the segment~$ww'$ intersects the outer face, it is
clearly necessary that~$w' \in R_D(u) \cup R_D(v)$ or~$w \in R_{D'}(u)
\cup R_{D'}(v)$.  The former ensures that the segment~$ww'$ does not
intersect the interior of~$D$ and the letter ensures the same property
for~$D'$.  These intersections behaviors are not independent.  It is
in fact readily seen that~$w' \in R_D(x)$ if and only if~$w \in
R_{D'}(x)$ for~$x \in \{u,v\}$; see Fig.~\ref{fig:attachment}.  More generally, the
same observations also hold for a chord~$e= \{u,v\}$ that is shared by
\begin{inparaenum}[(a)]
  (a) two triangles~$\tau$ and~$\tau'$, (b) a triangle~$\tau$ and a
  4-clique~$\tau'$ and (c) by two 4-cliques~$\tau$ and~$\tau'$.
\end{inparaenum}
To see this, note that the regions~$R_{D}(u)$ and~$R_{D'}(u)$ in
Fig.~\ref{fig:attachment} do not change if~$D$ and/or $D'$ are part of
a 4-clique.  More formally, let~$W$ and~$W'$ be the vertices of~$\tau$
and~$\tau'$ that are distinct from~$u$ and~$v$, respectively.  Let~$D$
and~$D'$ denote the triangles incident to~$e$.  Then the following
condition is necessary
\vspace{-1ex}
\begin{align}
  \label{eq:1}
  W' \subseteq R_D(u) \text{\quad or \quad }  W' \subseteq R_D(v)\,\,.\tag{*}
\end{align}
\vspace{-4ex}

\noindent Again it holds that $W' \subseteq R_D(x)$ if and only if~$W \subseteq
R_{D'}(x)$ for~$x \in \{u,v\}$.

Given a planar straight-line drawing of a graph~$G$, we say that a
chord~$e$ is \emph{good} if its adjacent triangles or 4-cliques
satisfy condition~\eqref{eq:1}.  This notion gives us a more local
criterion to decide whether a given planar straight-line drawing is an
outside-obstacle representation.

\begin{lemma}
  \label{lem:nonlocal-crossing}
  Let~$G$ be a biconnected inner-chordal plane graph and let~$\Gamma$
  be a planar straight-line drawing of~$G$.  Then~$\Gamma$ is a
  (plane) outside-obstacle representation if and only if each chord is
  good.
\end{lemma}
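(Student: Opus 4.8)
The plan is to prove both directions of the equivalence. The forward direction is essentially immediate: if $\Gamma$ is a plane outside-obstacle representation, then for every chord $e=\{u,v\}$, the non-edge between the two ``far'' tips $w,w'$ (of the triangles $D,D'$ incident to $e$) must intersect the outer face, since $\{w,w'\}\notin E(G)$. As observed in the discussion preceding the lemma, this forces $w'\in R_D(u)\cup R_D(v)$ (or the symmetric condition on $w$), which is exactly condition~\eqref{eq:1}; thus every chord is good. The only subtlety is lifting this from the two tips to the full vertex sets $W,W'$ in the $K_4$ cases, but this is handled by the remark that the regions $R_D(\cdot)$ are unaffected by whether $D$ lies inside a $4$-clique.

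The real content is the reverse direction: assuming every chord is good, I must show that \emph{every} non-edge of $G$ intersects the outer face. So fix a non-edge $\{a,b\}$ and consider the segment $ab$. First I would note that since $\Gamma$ is planar and $G$ is inner-chordal, I can track the segment $ab$ as it passes through the planar subdivision induced by $\Gamma$. The natural tool is the construction tree $T(G)$: the vertices $a$ and $b$ lie in (the closures of) cells corresponding to nodes $\sigma,\tau$ of $T(G)$, and the unique path between them in $T(G)$ traverses a sequence of chords $e_1,\dots,e_k$. The plan is to argue that if the segment $ab$ stayed entirely inside the drawing (i.e.\ never entered the outer face), it would have to cross every chord $e_i$ on this path, passing from one triangle/$4$-clique to the next.

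The key step, then, is a local transfer argument: I would show that the goodness of a single chord $e_i=\{u,v\}$ forbids the segment from crossing $e_i$ ``through the interior'' and emerging on the far side while remaining inside the drawing. Concretely, goodness places all of $W'$ into one of the regions $R_D(u)$ or $R_D(v)$, which means the far $4$-clique or triangle is wedged into a cone that does not block the relevant escape route; a segment trying to pass from one side to the other while staying interior is forced into a contradiction with this cone placement (equivalently, the non-edge $\{w,w'\}$ already exits the drawing at $e_i$, and any interior segment crossing $e_i$ would have to cross the visibility segment $ww'$, violating planarity or forcing it outside). I expect the main obstacle to be making this local-to-global argument fully rigorous: one must carefully handle the case $k=1$ (the two endpoints lie in triangles/cliques sharing a single chord, so the claim reduces directly to condition~\eqref{eq:1}) and the inductive step where the segment passes through an intermediate node of $T(G)$, ensuring that goodness of each traversed chord composes correctly so that $ab$ must leave the drawing at some chord along the path. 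I would organize this as an induction on the path length $k$ in $T(G)$, using the single-chord case as the base and the cone/region geometry of Definition~\ref{dfn:regions1} to push the segment outward at the first chord where it would otherwise remain interior.
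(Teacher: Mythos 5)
Your forward direction is fine and matches the paper. The reverse direction, however, has a genuine gap at exactly the step you flag as the ``key step''. The claim that goodness of a single chord $e_i=\{u,v\}$ forbids a segment from crossing $e_i$ and emerging on the far side while remaining inside the drawing is false: goodness constrains only where the \emph{vertices} of the opposite triangle or $4$-clique lie (condition~\eqref{eq:1}); it does not block an arbitrary segment from crossing the chord. Concretely, take $u=(0,0)$, $v=(1,0)$, $w=(0.5,1)$, $w'=(-0.5,-0.25)$. The chord $\{u,v\}$ is good (here $w'\in R_D(u)$), yet the segment from $(0.5,0.5)\in D$ to $(0.1,-0.05)\in D'$ crosses $\{u,v\}$ at an interior point and lies entirely in $D\cup D'$. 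Your parenthetical justification also fails on both counts: when a chord is good, the segment $ww'$ passes \emph{around} an endpoint of $e_i$, so it need not meet $e_i$ or the segment $ab$ at all; and even if $ab$ did cross $ww'$, that would violate nothing, since $ww'$ is a non-edge rather than an edge of the drawing.

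No purely local argument of this kind can work, because a non-edge may legitimately cross chords in a valid representation; the contradiction is inherently global. What the paper's proof supplies, and what your induction lacks, is an invariant carried along the entire sequence of triangles $D_1,\dots,D_n$ traversed by the segment: the cone $C$ with apex $u$ spanned by $D_1$, together with the properties that each $D_i$ has its vertices outside $C$, that the edge shared by $D_i$ and $D_{i+1}$ cuts across $C$, and that the lines through the vertices of $D_i$ on either side of $C$ slope away from $C$ in the direction of $v$. Goodness of the crossed chords is used only to propagate these properties from $D_i$ to $D_{i+1}$; the contradiction appears solely at the far end, where $v\in V(D_n)$ must lie inside $C$ (because $uv$ crosses the side of $D_1$ opposite $u$) while the invariant forces it outside $C$. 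An induction ``on the path length in $T(G)$'' would need precisely such an inductive hypothesis, and you do not identify one. A further problem with the $T(G)$-path bookkeeping is that inside a $K_4$-node the segment also crosses interior edges that are not chords, for which condition~\eqref{eq:1} is not even defined; the paper sidesteps this by following the geometric sequence of triangles crossed by the segment rather than the construction tree.
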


\begin{proof}
  The condition that each chord is good is necessary.  For sufficiency
  we show that, in a drawing where each chord is good, every non-edge
  intersects the outer face.

  \begin{figure}[tb]
    \centering
    \includegraphics{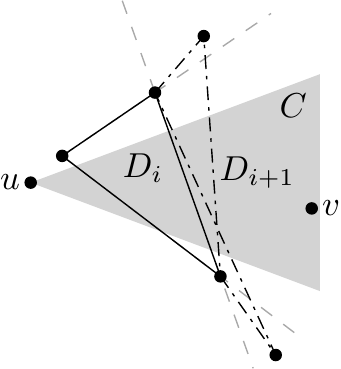}
    \caption{Illustration of the proof of
      Lemma~\ref{lem:nonlocal-crossing}.}
    \label{fig:non-local-crossing}
  \vspace{-2ex}
  \end{figure}

  Suppose for the sake of contradiction that~$u$ and~$v$ are two
  non-adjacent vertices of~$G$ such that the segment~$uv$ does
  \emph{not} intersect the outer face.  Then there is a minimal series
  $D_1, \dotsc, D_n$ of adjacent triangles of $G$ such that the
  segment~$uv$ is completely contained in the union of these
  triangles.  Clearly,~$n\ge2$ and, without loss of generality,~$u \in
  V(D_1)$ and~$v \in V(D_n)$.  We consider the subdrawing induced
  by~$D_1,\dots, D_n$.  Since~$uv$ intersects the triangle~$D_1$, it
  intersects the edge of~$D_1$ opposite of~$u$, which implies that~$v$
  is contained inside the cone~$C$ defined by~$D_1$ with base~$u$.  We
  show that this contradicts statement $2$.  If triangle~$D_i$ for~$1
  \le i \le n-1$ has the following properties: (i) its points are
  outside (or on the boundary) of~$C$, (ii) the edge shared by~$D_i$
  and~$D_{i+1}$ cuts across the cone~$C$, and (iii) the line defined
  by the two points of~$D_i$ that lie on the same side of~$C$ slope
  away from~$C$ in the direction towards~$v$, then the very same
  properties hold for~$D_{i+1}$ due to the chords being good; see
  Fig.~\ref{fig:non-local-crossing}.  By definition the property holds
  for~$D_1$, and hence it also holds for~$D_n$.  But this implies that
  the tip of~$D_n$, which is~$v$, must be placed outside of~$C$,
  contradicting the assumption.
\end{proof}

Unfortunately, it is not always possible to place the vertices inside
the regions such that all chords become good as this placement may
require crossings.

\boldparagraph{Chord Orientations and Outside-Obstacle Representations.}
Next, we introduce a certain type of orientations of the chords of
biconnected inner-chordal graphs.  Let~$G$ be a biconnected
inner-chordal graph and let~$\Gamma$ be a plane outside-obstacle
representation of~$G$.  Let~$e=\{u,v\}$ be a chord of~$G$, which
exists, unless~$G$ is~$K_3$ or~$K_4$.  Let~$D$ and~$D'$ denote the two
triangles incident to~$e$, and let~$w$ and~$w'$ denote the tips of~$D$
and~$D'$, respectively.  Due to Lemma~\ref{lem:nonlocal-crossing},
each chord satisfies condition~\eqref{eq:1}.  Hence, either~$w \in
R_{D'}(u)$ and~$w' \in R_{D}(u)$ or~$w \in R_{D'}(v)$ and~$w' \in
R_{D}(v)$.  We direct the chord~$e$ towards~$u$ in the former case and
towards~$v$ in the latter case.  In this way, we obtain an orientation
of the chords of~$G$.  Note that outer edges and inner edges of
4-cliques remain undirected.  The following lemma shows two crucial
properties of such an orientation.

\begin{lemma}
  \label{lem:chord-orientation}
  Let~$G$ be a biconnected inner-chordal graph with plane
  outside-obstacle representation~$\Gamma$.  The chord orientation
  determined by~$\Gamma$ satisfies the following properties.

  \begin{compactenum}[(i)]
  \item Each vertex has in-degree at most~2.
  \item If vertex~$v$ has in-degree~2, then its two incoming edges share a face.
  \end{compactenum}
\end{lemma}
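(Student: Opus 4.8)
The plan is to translate the combinatorial orientation into a purely metric statement about the angles at $v$, and then run a short angle-counting argument. First I would establish the geometric reformulation: a chord $e=\{u,v\}$ is directed towards $v$ if and only if the two triangular faces of $\Gamma$ flanking $e$ at $v$ together span an angle exceeding $\pi$ at $v$. To see this, place $v$ at the origin, and let $D=\{u,v,w\}$ and $D'=\{u,v,w'\}$ be the two triangles incident to $e$, with $w,w'$ on opposite sides of the line through $u$ and $v$. By Definition~\ref{dfn:regions1}, $R_D(v)$ is precisely the cone at $v$ vertically opposite to the interior angle $\angle uvw$ of $D$; a short computation with the directions of $vu$, $vw$, $vw'$ then shows that $w'\in R_D(v)$ exactly when $\angle uvw + \angle uvw' > \pi$ (general position rules out equality). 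Since $e$ is directed towards $v$ exactly when $w'\in R_D(v)$ (equivalently $w\in R_{D'}(v)$), this yields the claimed angle criterion. The same argument applies verbatim when $\tau$ or $\tau'$ is a $K_4$-node, because $w$ and $w'$ are still the apexes of the triangular faces incident to $e$, so the relevant quantities are the actual face angles at $v$; here I would invoke the remark following~\eqref{eq:1} that the regions are unaffected by the $K_4$-structure.

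Next I would fix the rotation at $v$. Since $v$ is an endpoint of a chord, it lies on the outer face, so the bounded faces incident to $v$ occupy an angular sector of total measure strictly less than $2\pi$, the outer face contributing a positive angle at $v$. List these faces in rotational order as $F_1,\dots,F_m$ with angles $a_1,\dots,a_m$ at $v$, separated by the interior edges $g_1,\dots,g_{m-1}$ at $v$, so that each $g_i$ is flanked by $F_i$ and $F_{i+1}$. Call two such edges \emph{adjacent} if their indices differ by one (equivalently, they share a flanking face). By the reformulation, an incoming chord $g_i$ of $v$ satisfies $a_i + a_{i+1} > \pi$.

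For part~(i), suppose two incoming chords $g_i,g_j$ with $i<j$ were non-adjacent, i.e.\ $j\ge i+2$. Then $\{F_i,F_{i+1}\}$ and $\{F_j,F_{j+1}\}$ are four distinct faces, whence $a_i+a_{i+1}+a_j+a_{j+1} > 2\pi$, contradicting $\sum_k a_k < 2\pi$. Hence any two incoming chords are adjacent; but three edges cannot be pairwise adjacent (three indices cannot pairwise differ by one), so there are at most two incoming chords. For part~(ii), if $v$ has in-degree exactly~$2$, the two incoming chords $g_i,g_{i+1}$ are forced to be adjacent and therefore both lie on the common triangular face $F_{i+1}=\{v,u_i,u_{i+1}\}$, which is the shared face asserted by the lemma.

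The main obstacle is the geometric reformulation in the first step: verifying that the cone condition $w'\in R_D(v)$ is equivalent to the face-angle sum at $v$ exceeding $\pi$, and confirming that this equivalence survives when a triangle is replaced by a $K_4$ on either side of the chord. Once this dictionary is in place, parts~(i) and~(ii) reduce to the elementary fact that the angles of the bounded faces at an outer vertex sum to less than $2\pi$.
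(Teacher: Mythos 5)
Your proof is correct and follows essentially the same route as the paper: translate the direction of a chord into the statement that the two face angles flanking it at $v$ sum to more than $\pi$, then use that the inner angles at an outer vertex sum to less than $2\pi$ to bound the in-degree and force the two incoming chords to share a face. The only difference is cosmetic — you supply an explicit computation for the angle criterion that the paper dismisses as ``readily seen,'' and you phrase the counting as pairwise adjacency of incoming chords rather than coincidence of consecutive angles.
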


\begin{proof}
  Consider an orientation according to~$\Gamma$.  Let~$e=(u,v)$ be a
  directed chord with incident triangles~$D$ and~$D'$, whose tips with
  respect to the base~$e$ are~$w$ and~$w'$, respectively.  Due to the
  direction of~$e$, we have that~$w \in R_{D'}(v)$ and~$w' \in
  R_D(v)$.  It is readily seen, e.g., in Fig.~\ref{fig:attachment},
  that the two angles at~$v$ incident to~$e$ sum up to more
  than~$\pi$.

  Let~$e_1,\dots,e_k$ be chords that are directed towards~$v$.
  Without loss of generality assume that these chords are numbered in
  the order of counterclockwise occurrence around~$v$, starting from
  the outer face.  Since the angles at~$v$ incident to each of these
  edges sum up to more than~$\pi$, it follows that some of these
  angles must coincide.  Due to the ordering, it follows that the
  angle at~$v$ right of~$e_i$ (with respect to the orientation
  towards~$v$) coincides with the left angle of~$e_{i+1}$
  for~$i=1,\dots,k-1$.  By planarity and since~$v$ is an outer vertex,
  no other angles may coincide.  For~$i=1,\dots,k$, let~$\alpha_i$
  denote the angle left of~$e_i$ and let~$\alpha_{k+1}$ denote the
  angle right of~$e_k$.  By the above observation, we
  have~$\alpha_i+\alpha_{i+1} > \pi$ for~$i=1,\dots,k$.

  For~$k\ge 3$, the sum of inner angles incident to~$v$ is at
  least~$\alpha_1 + \alpha_2 + \alpha_3+ \alpha_4 > 2\pi$; a
  contradiction.  For~$k=2$ the shared angle~$\alpha_2$ implies
  property~(ii).
\end{proof}

By virtue of Lemma~\ref{lem:chord-orientation}, we call any
orientation of the chords of a biconnected inner-chordal graph that
satisfies the properties~(i) and~(ii) an \emph{outside-obstacle
  orientation}.  

Lemma~\ref{lem:chord-orientation} finally allows us to give a concise
argument why the graph from Fig.~\ref{fig:example-2} does not admit a
plane outside-obstacle representation.  We argue that it does not
admit an outside-obstacle orientation.  It follows from the conditions
of such an orientation that, for each 4-clique that is incident to
three chords, these chords must be oriented such that they form a
cycle.  Consider the middle 4-clique in Fig.~\ref{fig:example-2}.  If
we orient it clockwise, then the lower left edge may not have
additional incoming chords, which prevents us from orienting the
chords of the left 4-clique as a cycle.  Symmetrically, choosing a
counterclockwise orientation for the middle 4-clique prevents correct
orientation of the right 4-clique.  The graph in
Fig.~\ref{fig:example-3}, however, does admit an outside-obstacle
orientation, which is indicated in the figure.

Our next goal is to prove that the existence of an outside-obstacle
orientation is equivalent to the existence of a plane outside-obstacle
representation.  In particular, this shows our claim that the graph in
Fig.~\ref{fig:example-3} indeed admits a plane outside-obstacle
representation, e.g., the one shown in Fig.~\ref{fig:example-oor-2}.

\begin{theorem}
  \label{thm:orientation-representation}
  Let~$G$ be a biconnected inner-chordal plane graph.  Then~$G$ admits
  a plane outside-obstacle representation if and only if it admits an
  outside-obstacle orientation.
\end{theorem}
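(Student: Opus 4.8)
The ``only if'' direction is essentially already established: by Lemma~\ref{lem:chord-orientation}, the chord orientation determined by any plane outside-obstacle representation~$\Gamma$ satisfies properties~(i) and~(ii), and is therefore an outside-obstacle orientation. The substance of the theorem is the converse, so I focus on constructing a representation from a given outside-obstacle orientation. My plan is to build the drawing incrementally along the construction tree~$T(G)$ and to invoke Lemma~\ref{lem:nonlocal-crossing}: it suffices to produce a planar straight-line drawing in which every chord is good and, moreover, is good on the side prescribed by the given orientation, i.e.\ for a chord directed towards~$v$ the tips of the incident triangles lie in the regions~$R_D(v)$ and~$R_{D'}(v)$, so that condition~\eqref{eq:1} holds with the vertex~$v$.

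First I would dispose of the inner vertices of the~$K_4$-nodes: by Lemma~\ref{lem:inner-chordal-degree} such a vertex has degree~$3$, lies on no chord, and only needs to be placed inside its triangle, where it affects neither the goodness of any chord nor the visibility among chord endpoints. Hence it can be inserted at the very end, and the problem reduces to realizing the outerplanar triangulation~$G'$ obtained by deleting the inner vertices, together with the prescribed orientation of its chords. Here the chords are exactly the tree edges of~$T(G)$, and a straight-line drawing of~$G'$ is just a simple polygon triangulated by these chords, whose shape I am free to choose.

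I would root~$T(G)$ at a leaf and grow the polygon one triangle at a time, processing nodes from the root outward; the base case ($G=K_3$ or~$K_4$, i.e.\ no chords) is trivial. When attaching a new triangle~$D'$ across its parent chord~$e=\{u,v\}$ to the already-drawn triangle~$D$, I place the new apex~$w'$ on the far side of~$e$, outside the current polygon, and inside the region dictated by the orientation---in~$R_D(v)$ if~$e$ is directed towards~$v$, which is equivalent to making the angles of~$D$ and~$D'$ at~$v$ incident to~$e$ sum to more than~$\pi$. The crucial freedom is that~$w'$ may be chosen as ``flat'' (close to the line through~$e$) and as far out as desired; this lets me confine the whole attached subtree to a thin cone pointing away from the current drawing, which is the invariant I would carry through the induction.

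The main obstacle is global consistency: I must guarantee that placing a new apex never creates a crossing and never retroactively spoils the goodness of a chord already drawn at~$u$, at~$v$, or at a future neighbor of~$w'$. Both are controlled by the angular budget at each vertex. At an outer vertex~$v$ the inner angles sum to strictly less than~$2\pi$, while each chord directed \emph{into}~$v$ demands that its two incident angles sum to more than~$\pi$. Properties~(i) and~(ii) guarantee that there are at most two such incoming chords and, when there are two, they share a face, so their demands overlap in a single common angle; one therefore has to realize at most three consecutive angles~$\alpha_1,\alpha_2,\alpha_3$ with~$\alpha_1+\alpha_2>\pi$ and~$\alpha_2+\alpha_3>\pi$, which is feasible with total less than~$2\pi$ (for instance~$\alpha_2$ near~$\pi$ and the outer angles small). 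Making every outgoing chord flat at~$v$ keeps the remaining angles small and simultaneously renders those chords good towards their other endpoint. The thin-cone invariant then lets these local angle choices be realized by actual coordinates without interference, which completes the induction; the inner~$K_4$-vertices are dropped back in, and Lemma~\ref{lem:nonlocal-crossing} certifies that the result is a plane outside-obstacle representation.
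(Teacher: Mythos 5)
Your ``only if'' direction is exactly the paper's (it is just Lemma~\ref{lem:chord-orientation}), and your incremental construction for the converse---growing the drawing along $T(G)$, placing each new apex in the region prescribed by the orientation, and keeping new material flat so that the angular budget at each vertex works out---is in the same spirit as the paper's proof, which makes your ``thin cone / angle budget'' bookkeeping precise via two convexity invariants maintained along a breadth-first traversal of $T(G)$. The genuine gap is your very first reduction: the claim that the inner vertices of the $K_4$-nodes may be deleted and re-inserted ``at the very end'' anywhere inside their triangles because they ``affect neither the goodness of any chord nor the visibility among chord endpoints.'' This is false. In condition~\eqref{eq:1} the set $W'$ consists of \emph{all} vertices of the $4$-clique other than the endpoints of the chord, so goodness of a chord incident to a $K_4$ constrains the inner vertex as well; geometrically, a badly placed inner vertex acquires visibilities outside its own $K_4$. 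Concretely, suppose the $K_4$ has outer triangle $uvw'$ and shares the chord $\{u,v\}$ with a triangle $D$ whose tip is $t$, the chord being directed towards $v$ with $w'\in R_D(v)$. Any point $x$ inside $uvw'$ sufficiently close to the midpoint of $uv$ sees $t$: the segment $xt$ crosses the open edge $uv$ and then runs inside $D$ to its apex, so it never meets the outer face, although $\{x,t\}$ is a non-edge. Hence $x$ is genuinely constrained (it must lie, up to the boundary cases, in $R_D(v)$ intersected with the triangle), and for a $K_4$ all three of whose outer edges are chords, $x$ must lie in the intersection of its triangle with \emph{three} such cones, one determined by each neighboring triangle. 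That intersection is not automatically nonempty: each constraint is a half-plane bounded by a line through one corner of the triangle, and three such half-planes can cut off three thin slivers hugging the three sides, whose common intersection is empty. So whether the deferred insertion succeeds depends on quantitative properties of the outer drawing, which your argument never establishes---indeed cannot, since you assert there is nothing to establish.

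The paper sidesteps exactly this difficulty by \emph{not} postponing the inner vertices: when a $K_4$-node is attached across its parent chord $(u,v)$, both its new outer tip and its new inner vertex are placed together inside the region $R_D(v)$, close to $v$, and the same convexity invariants that protect the outer vertices then discharge all later constraints involving the inner one. Your proof can be repaired along the same lines---fold the inner vertices into the induction, or prove that your flat placements always leave the relevant cones' intersection with the triangle nonempty---but as written the reduction to the outerplanar case is unjustified, and the $K_4$'s are precisely where the theorem has its teeth: the only obstructions to representability (such as the graph of Fig.~\ref{fig:example-2}) come from them.
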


\begin{proof}
  The ``only if''-part holds due to Lemma~\ref{lem:chord-orientation}.
  Let~$G$ be a biconnected inner-chordal graph with an
  outside-obstacle orientation and let~$T(G)$ be its construction
  tree.  We construct a plane outside-obstacle representation of~$G$.

  For a subtree~$T' \subseteq T$, we denote by~$G(T')$ the subgraph
  of~$G$ corresponding to~$T'$.  Note that~$G(T) = G$.
  Let~$\tau_1,\dots,\tau_k$ denote the nodes of~$T$ in breadth-first
  order starting at an arbitrary node~$\tau_1$.  For~$j=1,\dots,k$,
  let~$T_j$ be the subtree of~$T$ consisting of
  nodes~$\tau_1,\dots,\tau_j$, and let~$G_j = G(T_j)$ the
  corresponding subgraph of~$G$.  We inductively construct a sequence
  of plane outside-obstacle representations~$\Gamma_1,\dots,\Gamma_k$
  of~$G_1,\dots, G_k$.  Then~$\Gamma_k$ is the desired plane
  outside-obstacle representation of~$G=G_k$.

  Consider the orientation of~$G_i$ induced by~$G$ (note that some
  edges remain undirected).  We call a directed edge \emph{active} if
  it is incident to the outer face of~$G_i$ and \emph{inactive}
  otherwise.  An outer vertex~$v$ is \emph{active} if it is the target
  of an active edge.  It is \emph{inactive} otherwise.  The
  \emph{inactive degree} of~$v$ in~$G_i$ is the number of inactive
  edges with target~$v$.

  Throughout steps $i=1,\dots,k$, we maintain the following
  properties:
  \begin{compactenum}[(i)]
  \item The outer angle of vertices with inactive degree~0 is convex.
  \item For an active vertex~$v$ with inactive degree~1, removing the
    unique active in-edge incident to~$v$ results in a convex outer
    angle.
  \end{compactenum}
  For~$G_1$ any plane outside-obstacle representation~$\Gamma_1$ satisfies
  these properties.  We now show how to proceed from~$G_i$
  to~$G_{i+1}$.  Let~$e=(u,v)$ be the directed chord determined by
  adding~$\tau_{i+1}$ to~$T(G_i)$, let~$D$ be the inner triangle
  of~$G_i$ bounded by~$e$ and let~$e'=(u',v)$ denote the other edge
  of~$D$ incident to~$v$.

  We aim to place the vertices in~$V(G_{i+1}) \setminus V(G_i)$ inside
  the region~$R_D(v)$, which is consistent with the orientation
  of~$e$.  We first show that this is possible without creating
  intersections.  If~$v$ has inactive degree~0, then~$v$ is convex,
  and hence the intersection of~$R_D(v)$ with a suitably
  small~$\varepsilon$-ball around~$v$ is disjoint from any vertices
  and edges of~$\Gamma_i$.  Similarly, if~$v$ is active but has
  inactive degree~1, then, after removing~$(u,v)$,~$v$ is convex by
  property~(ii).  In this case the subcone of~$R_D(v)$ defined by~$e'$
  and the other outer edge incident to~$v$ intersected with a suitably
  small~$\varepsilon$-ball is empty; see Fig.~\ref{fig:construction-1}.

  \begin{figure}[tb]
    \centering
    \begin{subfigure}[b]{.4\textwidth}
      \centering
      \includegraphics[page=1]{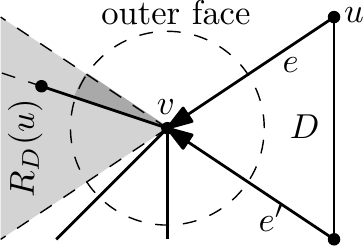}
      \caption{}\label{fig:construction-1}
   \end{subfigure}\hfil
   \begin{subfigure}[b]{.3\textwidth}
      \centering
      \includegraphics[page=3]{fig/construction}
      \caption{}\label{fig:construction-2}
   \end{subfigure}\hfil
  \begin{subfigure}[b]{.3\textwidth}
      \centering
      \includegraphics[page=2]{fig/construction}
      \caption{}\label{fig:construction-3}
   \end{subfigure}\hfil
   \caption{Construction of a plane outside-obstacle representation. (a)
     Attaching at a vertex~$v$ with inactive degree~1. The region
     where the new points are placed is shaded in dark gray. (b), (c)
     show that adding an edge preserves the properties required by
     the construction.  The shaded region is the inner part of the
     drawing after removing the active edge; the outer angle at~$v$ is
     convex.}
    \label{fig:construction}
    \vspace{-2ex}
  \end{figure}

  We show that, by placing the new vertices in these regions suitably
  close to~$v$, properties~(i) and~(ii) can be established for the
  resulting plane outside-obstacle representation~$\Gamma_{i+1}$.  First
  observe that placing the new vertices close enough to~$v$ avoids
  touching or crossing any vertices and edges of~$G_i$, i.e.,
  $\Gamma_{i+1}$ is plane.  Moreover, the addition only changes angles
  at~$u$ and~$v$, and hence all other vertices satisfy properties~(i)
  and~(ii) by virtue of the induction hypothesis.

  Consider vertex~$v$.  In~$G_{i+1}$ it has inactive degree at least~1
  since~$(u,v)$ is an inner edge.  If the inactive degree of~$v$ is~2,
  there is nothing to prove as~$v$ must be inactive, since
  outside-obstacle orientations have in-degree at most~2.  Hence
  assume that~$v$ has inactive degree~1 and it is active.  The
  properties of outside-obstacle orientations imply that there is a
  unique active edge directed towards~$v$ in~$G_{i+1}$ and it must be
  a neighbor of~$e$.  This edge is either the edge~$e'$ or the newly
  added outer edge~$e''$ incident to~$v$.

  If~$e'$ is the incoming active edge at~$v$, the outer angle
  at~$v$ was convex in~$\Gamma_i$, and hence any point in~$R_D(v)$
  results in an outer angle of less than~$\pi$ after removing~$e'$;
  see Fig.~\ref{fig:construction-2}.
  If~$e''$ is the incoming active edge at~$v$, the outer angle
  at~$v$ in~$\Gamma_{i+1}$ after removing~$e''$ is the outer angle
  of~$v$ in~$\Gamma_i$, which is convex by the induction hypothesis;
  see Fig.~\ref{fig:construction-3}.

  In all cases vertex~$v$ satisfies properties~(i) and~(ii).  We show
  that, by positioning the new vertices close enough to~$v$, we can
  also satisfy properties~(i) and~(ii) for $u$.  First note that the
  inactive degree of~$u$ does not change.  If the inactive degree
  of~$u$ is~2, there is nothing to prove.  If the inactive degree
  of~$u$ is~0 or~1, by placing the new vertices close to the line
  through~$u$ and~$v$, the angle between~$e$ and the new outer edge
  incident to~$u$ can be made arbitrarily small.  Thus, if~$u$ was
  convex in~$\Gamma_i$, it remains so in~$\Gamma_{i+1}$.  And, by the
  same argument, if~$u$ was convex in~$\Gamma_i$ after removing the
  active edge incident to~$u$, it remains so in~$\Gamma_{i+1}$.
  Hence~$\Gamma_{i+1}$ satisfies the induction hypothesis.
\end{proof}

\section{Characterization and Decision Algorithm}
\label{sec:main}

In this section, we prove characterizations of graphs that admit a
plane outside-obstacle representation and we present a linear-time
algorithm that decides whether a given graph admits a plane
outside-obstacle representation.

\boldparagraph{Characterization of Outerplanar Graphs.}
For biconnected outerplanar graphs
Theorem~\ref{thm:orientation-representation} immediately implies a
complete characterization of the graphs that admit a plane
outside-obstacle representation.

\begin{theorem}
  \label{thm:bico-outer}
  A biconnected outerplanar graph admits a plane outside-obstacle
  representation if and only if it is chordal.
\end{theorem}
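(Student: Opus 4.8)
The plan is to combine the two halves using the machinery already developed. For the ``only if'' direction I would argue directly from Lemma~\ref{lem:plane-oor-chordal}: a graph with a plane outside-obstacle representation is inner-chordal plane, and inner-chordality means that every cycle of length at least~$4$ has a chord embedded inside it. Since possessing a chord for every long cycle is exactly the definition of chordality, any graph (outerplanar or not) with a plane outside-obstacle representation is chordal; in particular this settles the ``only if'' part without having to worry about which embedding the representation realizes.

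For the ``if'' direction, let $G$ be a biconnected, chordal, outerplanar graph. As noted in the text, a chordal outerplanar graph has all inner faces triangular, so $G$ is inner-chordal plane and, having no inner vertices, its construction tree $T(G)$ consists solely of $K_3$-nodes, with the edges of $T(G)$ in bijection with the chords of $G$. By Theorem~\ref{thm:orientation-representation} it suffices to exhibit an outside-obstacle orientation of the chords, and I would construct one as follows. Root $T(G)$ at an arbitrary node. For a vertex $v$ of $G$, the triangles containing $v$ form a subpath of $T(G)$ (the fan around $v$); let $t(v)$ denote the node of this subpath closest to the root. Now orient each chord $\{v,w\}$ toward the endpoint whose node $t(\cdot)$ is deeper in $T(G)$, breaking ties (when $t(v)=t(w)$) arbitrarily. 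Intuitively, each chord points toward the endpoint introduced deeper in the tree, so that chords flow away from ``shallow hubs'' rather than piling up on them.

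The heart of the argument is the claim that every in-edge at a vertex $v$ is one of the (at most two) chords of the \emph{single} triangle $t(v)$ incident to $v$; granting this, property~(i) holds because a triangle has only two edges at $v$, and property~(ii) holds because those two in-edges share the face $t(v)$. To prove the claim I would take a chord $\{v,w\}$ directed into $v$, so that $t(w)$ is no deeper than $t(v)$, and consider the shallower $\sigma$ of its two incident triangles, which contains both $v$ and $w$. The ancestor of $\sigma$ at the depth of $t(v)$ is unique in the tree; the fan-path of $v$ shows it equals $t(v)$, while the fan-path of $w$ shows it contains $w$, forcing $w\in t(v)$ and hence $\{v,w\}$ to be an edge of $t(v)$. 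The main obstacle is exactly this tree-ancestor uniqueness step, together with the bookkeeping that the two edges of $t(v)$ incident to $v$ are the only candidates for in-edges (the parent edge of $t(v)$ is not incident to $v$, and outer edges are never oriented) and that they indeed share the face $t(v)$. Once this is in place, Theorem~\ref{thm:orientation-representation} converts the orientation into the desired plane outside-obstacle representation, completing the proof.
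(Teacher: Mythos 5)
Your proposal is correct, and its skeleton matches the paper's: necessity via Lemma~\ref{lem:plane-oor-chordal} (inner-chordal plane implies chordal), sufficiency by exhibiting an outside-obstacle orientation and invoking Theorem~\ref{thm:orientation-representation}. Where you genuinely differ is in how the orientation is built. The paper uses a short greedy peeling argument: a biconnected outerplanar graph always contains a vertex of degree at most~2; repeatedly orient the (at most two) edges incident to such a vertex toward it and delete it, and finally undo the orientation of the outer edges. Your construction is instead a static, global rule on the rooted construction tree: orient each chord toward the endpoint whose fan-apex $t(\cdot)$ is deeper, and verify properties~(i) and~(ii) by the ancestor argument showing that every in-edge at $v$ is an edge of the single triangle $t(v)$. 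The two are close relatives --- the paper itself remarks that its peeling amounts to a bottom-up traversal of $T(G)$, and in that traversal the last triangle of $v$'s fan to disappear is exactly your $t(v)$ --- but they buy different things. The paper's version is shorter and needs no tree machinery beyond the existence of low-degree vertices, though it is terse about why two in-edges at a vertex must share a face (one must track that the two edges of a degree-2 vertex bound a common inner face in the current, partially deleted graph). Your version is non-procedural and makes that face-sharing property completely explicit (both in-edges lie in $t(v)$), at the cost of the fan-path and ancestor-uniqueness bookkeeping; both of those supporting facts (the triangles at $v$ form a path in $T(G)$, and the apex of such a path is an ancestor of every node on it) are standard and hold as you claim, so there is no gap.
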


\begin{proof}
  Being chordal is a necessary condition due to
  Lemma~\ref{lem:plane-oor-chordal}.  Conversely, if an outerplanar
  graph is chordal, it is obviously inner-chordal.  We show that every
  biconnected inner-chordal outerplane graph admits an outside
  obstacle orientation.

  Recall that a biconnected outerplanar graphs contains a vertex of
  degree at most~2.  We iteratively construct an orientation by
  directing the incident edges of a vertex with degree at most~2
  towards it and removing it from the graph.  In this way, we obtain
  an orientation with the properties that each vertex has in-degree at
  most~2, and moreover, if a vertex has in-degree~2, then the two
  incoming edges share an inner face.  Undoing the orientations of the
  outer edges, we obtain an outside-obstacle orientation.  Now the
  claim follows from Theorem~\ref{thm:orientation-representation}.
\end{proof}

This result can easily be strengthened in two ways.  First, if the
outerplanar graph is not biconnected but chordal, then it can easily
be augmented such that it becomes biconnected but remains
(inner-)chordal and outerplanar and hence satisfies the conditions of
Theorem~\ref{thm:bico-outer}, yielding a plane outside-obstacle
representation of the augmented graph.  By iteratively removing
augmentation edges that are incident to the outer face we obtain a plane
outside-obstacle representation of the original graph.

\begin{corollary}
  \label{cor:outer}
  An outerplanar graph admits a plane outside-obstacle representation
  if and only if it is chordal.
\end{corollary}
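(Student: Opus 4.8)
The plan is to reduce to the biconnected case already settled by Theorem~\ref{thm:bico-outer}. The ``only if'' direction is immediate: by Lemma~\ref{lem:plane-oor-chordal} any graph with a plane outside-obstacle representation is inner-chordal in the embedding induced by the representation, and, as noted above, an outerplanar graph is inner-chordal exactly when it is chordal; hence a representable outerplanar graph must be chordal. For the converse, let $G$ be a chordal outerplanar graph. If $G$ is disconnected I would place a representation of each connected component in its own region, with the regions far enough apart that every segment joining points of distinct components passes through the common unbounded face; it therefore suffices to treat connected~$G$.

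So assume $G$ is connected and chordal and fix an outerplane embedding. The key structural fact is that, since $G$ is chordal, every bounded face of this embedding is a triangle. Consequently $G$ augments to a maximal outerplanar graph $G^{+}$ on the same vertex set by adding edges that all lie in the \emph{outer} face of $G$: there is no room to add a chord inside an already-triangular bounded face. The graph $G^{+}$ is biconnected, outerplanar and chordal, so Theorem~\ref{thm:bico-outer} yields a plane outside-obstacle representation $\Gamma^{+}$ of~$G^{+}$. I would then recover a representation of $G$ by deleting the augmentation edges $E(G^{+})\setminus E(G)$ from $\Gamma^{+}$ one at a time, at each step removing an edge currently incident to the outer face.

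Two things must be checked during this peeling. First, deleting edges keeps the drawing plane, and since $G$ is connected and spans $G^{+}$ no augmentation edge is a bridge in any intermediate graph; hence an edge $e$ chosen for deletion has the outer face on exactly one side and a bounded face on the other. Because we never delete an edge separating two bounded faces, that bounded face is always one of the original triangles of~$G^{+}$, and after the deletion the relative interior of $e$ lies in the enlarged outer face. As the outer face only grows, every non-edge that already crossed the outer face continues to do so, and each deleted augmentation edge — now a non-edge of $G$ — crosses the outer face from the moment it is removed onward; thus every intermediate drawing, and in particular the final drawing of $G$, is a plane outside-obstacle representation.

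The second point is the main obstacle, and it is exactly where chordality is used: I must show that as long as augmentation edges remain there is always one incident to the outer face, so that the peeling can continue until all of them are gone. Here I would use that every augmentation edge borders a triangle lying in the outer region $\mathcal{R}$ of $G$ (the part of $G$'s outer face that $G^{+}$ cuts into triangles), that $\mathcal{R}$ is attached to the unbounded outer face of~$G^{+}$, and that the triangles of $\mathcal{R}$ with their shared augmentation edges form a subforest of the weak dual of~$G^{+}$. Peeling a leaf triangle of this subforest exposes an augmentation edge on the outer face, and iterating empties~$\mathcal{R}$. Without chordality this step fails — for $C_4$ the single augmentation chord is trapped inside a bounded quadrilateral and never reaches the outer face — which is precisely the situation the chordality hypothesis rules out, and it is the part of the argument I expect to require the most care.
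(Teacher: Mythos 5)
Your proposal follows essentially the same route as the paper: augment the chordal outerplanar graph to a biconnected (in your case, maximal) chordal outerplanar supergraph, apply Theorem~\ref{thm:bico-outer}, and then iteratively delete augmentation edges incident to the outer face. You actually supply more detail than the paper does (the disconnected case, why deletion preserves the representation, and why a deletable edge always exists); the only slight correction is that the peeling must start at a triangle whose augmentation edge already lies on the outer face of the augmented graph---such a triangle exists because the triangulated region is attached to the unbounded face---rather than at an arbitrary leaf of your dual subforest, since a leaf triangle may have only internal augmentation edges.
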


Another observation is that the construction of the orientation in the
proof of Theorem~\ref{thm:bico-outer} essentially consists of a
bottom-up traversal of the construction tree of the graph with respect
to the root node, which is removed last.  It is then readily seen that
we can also remove~$K_4$-nodes that are leaves, provided they have
degree at most~2 in~$T(G)$.  A $K_4$ with degree~3 requires that its
chords are oriented to form a cycle, which cannot be ensured by the
construction.  It can, however, always be achieved it the $K_4$ of
degree~3 is the root of the tree.  We thus have the following
corollary.

\begin{corollary}
  \label{cor:bico-onek4deg3}
  Every biconnected inner-chordal graph that contains at most one
  $K_4$ for which all outer edges are chords admits a plane
  outside-obstacle representation.
\end{corollary}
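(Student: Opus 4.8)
The plan is to apply Theorem~\ref{thm:orientation-representation} and reduce the claim to constructing an outside-obstacle orientation of the chords of~$G$, which I would build by a bottom-up traversal of the construction tree~$T(G)$, generalizing the argument of Theorem~\ref{thm:bico-outer}. The first step is to choose the root~$\rho$ of~$T(G)$ carefully: if~$G$ contains a $K_4$-node all of whose triangle edges are chords (equivalently, a $K_4$-node of degree~$3$ in~$T(G)$), let~$\rho$ be this node, which is unique by hypothesis; otherwise every $K_4$-node has degree at most~$2$ and~$\rho$ may be chosen arbitrarily. In either case, every $K_4$-node other than~$\rho$ has degree at most~$2$ in~$T(G)$.

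Next I would orient the chords by repeatedly removing a node that is currently a leaf of the shrinking tree, ending with~$\rho$, and directing every deleted edge towards the endpoint that disappears first, exactly as in Theorem~\ref{thm:bico-outer}. For a $K_3$-leaf this means deleting its degree-$2$ tip; for a $K_4$-leaf (which has degree at most~$2$) I would first delete the inner vertex---whose three incident edges are inner edges of the $4$-clique and therefore stay undirected in the final orientation by Lemma~\ref{lem:inner-chordal-degree}---thereby reducing the node to a triangle, and then delete its tip. Keeping only the orientations of the chords of~$G$ yields the chord orientation. As in Theorem~\ref{thm:bico-outer}, each non-inner vertex is oriented exactly when it is deleted, at degree at most~$2$, bounding its in-degree by~$2$, while inner vertices receive only inner edges and have in-degree~$0$; moreover, whenever a vertex~$v$ attains in-degree~$2$, its two in-chords are two sides of the triangular face bounded by~$v$ at deletion, and a short check shows that a $K_4$-node of degree at most~$2$ never routes two of its chords into a single vertex, so property~(ii) is never violated. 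I expect this bookkeeping to be routine.

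The crux is the root. A $K_4$-node of degree~$3$ cannot be handled by deletion: as observed before the corollary, property~(ii) forbids any of its three triangle vertices from receiving two of its three chords, forcing in-degree exactly~$1$ at each of~$a$,~$b$,~$c$ and hence a directed cycle. The key observation making~$\rho$ tractable is that, since~$\rho$ is removed last, its four vertices survive until every other vertex has been deleted; consequently every chord incident to~$a$,~$b$, or~$c$ other than the three root chords is directed towards its (earlier-deleted) other endpoint, that is, away from~$\rho$. Thus when I finally orient the root triangle cyclically, each of~$a$,~$b$,~$c$ receives in-degree exactly~$1$ and the inner vertex in-degree~$0$, so properties~(i) and~(ii) hold at the root and, by the induction, everywhere. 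If instead no degree-$3$ $K_4$-node exists, $\rho$ is an ordinary node dispatched by the deletion step itself. Finally I would invoke Theorem~\ref{thm:orientation-representation} to obtain the representation.

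The main obstacle is precisely the global consistency of the bottom-up deletion with the cyclic choice forced at~$\rho$: orienting the root triangle as a cycle must not clash with a chord already directed into~$a$,~$b$, or~$c$ from a processed subtree. The removed-last property of~$\rho$ resolves this, and that is the step I would present most carefully. The remaining verifications---that the reduced graphs stay biconnected and inner-chordal, so each deleted degree-$2$ vertex indeed bounds a triangular face of~$G$ and the two-in-chord analysis goes through---are routine extensions of the proof of Theorem~\ref{thm:bico-outer}.
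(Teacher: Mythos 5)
Your proposal is correct and takes essentially the same route as the paper: a bottom-up traversal of the construction tree rooted at the unique degree-$3$ $K_4$-node (if one exists), orienting each chord towards its earlier-deleted endpoint, handling $K_4$-leaves by removing the inner vertex first, and finally closing the root's chords into a directed cycle---which is valid precisely because of your observation that every other chord incident to a root vertex already points away from the root. The only under-specified spot (shared by the paper's own sketch) is a root that is a $K_4$ of degree~$2$ in~$T(G)$ when no degree-$3$ $K_4$ exists: there the final deletion order must avoid directing both root chords into their common vertex (or one simply roots at a leaf of~$T(G)$ instead), a one-line fix within your framework.
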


Note that an augmentation as in the proof of
Corollary~\ref{cor:bico-onek4deg3} may increase the number of
$K_4$-nodes with degree~3.  Hence the result does not extend to
non-biconnected graphs.

\boldparagraph{Decision Algorithm for General Graphs.}
Next, we devise a linear-time algorithm to decide whether a
biconnected graph admits a plane outside-obstacle representation.  Of
course it is not difficult to test whether a graph is inner-chordal
and plane in linear time, and we assume in the following that our
input graph has these properties.

Due to Theorem~\ref{thm:orientation-representation}, deciding the
existence of a plane outside-obstacle representation is equivalent to
deciding the existence of an outside-obstacle orientation.  To test
whether a biconnected inner-chordal plane graph~$G$ admits an outside
obstacle orientation, we use dynamic programming on its construction
tree~$T(G)$, rooted at an arbitrary node.  

For each node~$\tau$ with parent edge~$\{u,v\}$ with orientation~$uv$
and binary flags~$d_{\tau,u}$ and~$d_{\tau,v}$, we are
  interested whether the subtree of~$T(G)$ with root~$\tau$ admits an
  outside-obstacle orientation such that
\begin{compactenum}
\item $\{u,v\}$ is oriented as $uv$,
\item $u$ has incoming edges if and only if~$d_{\tau,u} = 1$, and
\item $v$ has incoming edges distinct from~$uv$ if and only
  if~$d_{\tau,v} = 1$.
\end{compactenum}

We store this information in a 4-dimensional
table~$T[\tau,e,d_{\tau,u},d_{\tau,v}]$ of boolean variables.
Note that, for each node~$\tau$, table~$T$ contains only~$2^3 = O(1)$
entries.  We now show how to fill the entries of this table in linear
time.  Initially, we set all entries to \emph{false}.

For a leaf node~$\tau$ with parent edge~$\{u,v\}$, we
set~$T[\tau,uv,0,0] = T[\tau,vu,0,0] =$ \emph{true}, which models the
fact that we can choose any orientation of~$\{u,v\}$ and neither~$u$
nor~$v$ has incoming edges distinct from~$\{u,v\}$ in the subtree
consisting only of the leaf.  Let~$\tau$ be a node with
children~$\tau'$ and~$\tau''$ and corresponding chords~$\{u,w\},
\{v,w\}$ that connect them to~$\tau$.  We can easily check whether the
entries can be combined to an entry of~$\tau$.  Namely, try both
possible orientations of~$\{u,v\}$ and use the orientations
of~$\{u,w\}$ and~$\{v,w\}$ determined by the entries of the children
and the flags~$d_{\tau',u}$, $d_{\tau',w}$, $d_{\tau'',v}$,
and~$d_{\tau'',w}$ of the children to check that~$u,v$ and~$w$ satisfy
the constraints of the orientation.  If this is the case, we can
easily compute the two flags~$d_{\tau,u}$ and~$d_{\tau,v}$ from the
orientations of~$uw$, $vw$ and the flags~$d_{\tau',u}$
and~$d_{\tau'',v}$.  A simple induction shows that, in this way, we
set exactly the correct entries~$T[\tau,\cdot,\cdot,\cdot]$ to
\emph{true}.

Combining two entries takes~$O(1)$ time.  Since there are only~$2^3 =
O(1)$ entries per node, we can compute all entries of a node~$\tau$
from all combinations of entries of its at most two children in~$O(1)$
time.  Since there are~$O(n)$ nodes, the overall algorithm runs
in~$O(n)$ time.  At the root we may have to combine up to three
children, but the checks remain essentially the same.  Thus, the
overall algorithm runs in~$O(n)$ time.

\begin{theorem}
  There is a linear-time algorithm that decides whether a given
  biconnected graph admits a plane outside-obstacle representation.
\end{theorem}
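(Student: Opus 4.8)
The plan is to reduce the decision problem to testing the existence of an outside-obstacle orientation and then to solve the latter by bottom-up dynamic programming on the construction tree. By Theorem~\ref{thm:orientation-representation}, a biconnected inner-chordal plane graph~$G$ admits a plane outside-obstacle representation if and only if it admits an outside-obstacle orientation, so it suffices to decide the latter. First I would verify in linear time that the given biconnected input is inner-chordal and plane, fixing an embedding; if it is not, Lemma~\ref{lem:plane-oor-chordal} lets us reject immediately. Using Lemma~\ref{lem:inner-chordal-degree}, the decomposition along chords---removing the degree-$3$ inner vertices, marking the triangles that result, and computing the weak dual of the remaining outerplane graph---can be carried out in~$O(n)$ time, yielding the construction tree~$T(G)$ together with, for each node, its associated triangle or $4$-clique and the bijection between tree edges and chords.

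Next I would root~$T(G)$ at an arbitrary node and fill the boolean table~$T[\tau,e,d_{\tau,u},d_{\tau,v}]$ in a single post-order traversal. The invariant is that the entry is true exactly when the subtree rooted at~$\tau$ admits an orientation of its chords satisfying the outside-obstacle properties in which the parent chord~$\{u,v\}$ is oriented as~$e$, the flag~$d_{\tau,u}$ records whether~$u$ has an incoming chord, and~$d_{\tau,v}$ records whether~$v$ has an incoming chord other than~$uv$. For a leaf I would set the two base entries with both flags~$0$, reflecting that either orientation is feasible and no interior incoming edges exist. For an internal node with children~$\tau',\tau''$ attached via chords~$\{u,w\},\{v,w\}$, I would, for each of the two orientations of~$\{u,v\}$ and each pair of true child entries, check the local constraints at~$u$, $v$, and~$w$ (in-degree at most~$2$, and the shared-face condition when the in-degree is~$2$), and, when satisfied, set the corresponding entry of~$\tau$ after recomputing~$d_{\tau,u}$ and~$d_{\tau,v}$. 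A straightforward induction over the post-order shows that exactly the achievable configurations are marked true.

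For the running time I would note that the table has only~$2^3=O(1)$ entries per node, and combining the~$O(1)$ entries of at most two children, together with the constant-size local check, takes~$O(1)$ per node; summing over the~$O(n)$ nodes gives~$O(n)$. The root has no parent chord, so there I would instead combine the entries of its up to three children directly and accept if and only if some combination satisfies the orientation constraints at all shared vertices, which is again an~$O(1)$ check. By Theorem~\ref{thm:orientation-representation}, this yes/no answer correctly decides the existence of a plane outside-obstacle representation, establishing the claimed linear-time algorithm.

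I expect the main obstacle to be the bookkeeping in the combination step: phrasing the local constraints so that a single pair of flags per endpoint carries enough information to guarantee both the global in-degree bound and the shared-face condition, and in particular verifying that the shared-face condition---whose two incoming edges may lie in different nodes of~$T(G)$, and possibly across the three children at the root---is correctly enforced when merging. Proving the completeness direction of the induction, that every achievable configuration is discovered, is the delicate part; the runtime analysis is then routine.
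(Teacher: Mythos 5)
Your proposal is correct and follows essentially the same route as the paper: the reduction via Theorem~\ref{thm:orientation-representation}, the same four-index boolean table over the rooted construction tree with the same leaf initialization, child-combination step, root handling, and $O(n)$ accounting. The bookkeeping concern you flag (whether the per-endpoint flags suffice to enforce the shared-face condition) resolves cleanly because two chords share a face only if they are incident to a common node of~$T(G)$, so an incoming edge recorded deep in a child's subtree can never share a face with a chord outside that subtree and the corresponding flag combinations are simply rejected---a point the paper itself leaves to ``a simple induction.''
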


\section{Conclusion}
\label{sec:conclusion}

Inspired by obstacle representations introduced by Alpert et
al.~\cite{akl-ong-10}, we studied plane outside-obstacle
representations of graphs.  We characterized the biconnected graphs
that admit such a representation as the inner-chordal graphs that
admit a certain type of orientation of their chords.  Based on this,
we gave a combinatorial proof that every chordal outerplanar graph
admits a plane outside-obstacle representation.  We further derived a
linear-time algorithm for deciding whether a given biconnected graph
admits a plane outside-obstacle representation.

Our main open question are the following.  Can our characterization
and testing algorithm can be extended to general (inner-chordal)
graphs that are not necessarily biconnected?  Which graphs admit a
plane representation with a single obstacle?

\smallskip

\noindent\textbf{Acknowledgments}
Part of this work has been done while Alexander Koch participated in
the academic exchange program of T\=ohoku University and KIT.  AK
thanks Prof.~Dorothea Wagner and Prof.~Takeshi Tokuyama for their
support and Prof.~Yota \=Otachi from JAIST for helpful comments on the topic.

\bibliographystyle{abbrv}
\bibliography{vis_graphs}

\end{document}